\newtheorem{theorem}{Theorem}
\title{  AWGN Channel Capacity of Energy Harvesting Transmitters with a Finite Energy Buffer}
\author{\IEEEauthorblockN{Deekshith P K, Vinod Sharma}
\IEEEauthorblockA{ Dept. of ECE, Bangalore, India\\
Email: \{deeks,vinod\}@ece.iisc.ernet.in}
\and
\IEEEauthorblockN{ R Rajesh }
\IEEEauthorblockA{CABS, DRDO, Bangalore, India\\
Email: rajesh81r@gmail.com}}
\begin{document}
\maketitle
\begin{abstract}
We consider an AWGN channel with a transmitter powered by an energy harvesting source. The node is equipped with a finite energy buffer. Such a system can be modelled as a channel with side information (about energy in the energy buffer) causally known at the transmitter. The receiver may or may not have the side information. We prove that Markov energy management policies are sufficient to achieve the capacity of the system and provide a single letter characterization for the capacity. The computation of the capacity is expensive. Therefore, we discuss an achievable scheme that is easy to compute. This achievable rate converges to the infinite buffer capacity as the buffer length increases.
\end{abstract}
\noindent
\begin{IEEEkeywords}
Channel capacity, energy harvesting sources, finite energy buffer.
\end{IEEEkeywords}
\section{Introduction}
 Analysis of various channel models with the transmitters and the receivers being powered by energy harvesting sources has received considerable attention in the information theoretic literature. Of late, there has been  interest in the analysis of single and multi user channels with finite energy buffers (\cite{Rajesh_journey}). This model stands in between the two extremes of the infinite buffer and the no buffer model which have been well studied (\cite{Rajesh1},~\cite{Ulukus2}). However, the capacity for a finite buffer model is not known yet.

The capacity of an AWGN channel having an infinite buffer to store energy  is provided in \cite{Rajesh1}, \cite{Ulukus1}. \cite{Rajesh1} also provides the capacity of a system with no storage buffer and full causal knowledge of the energy harvesting process at the receiver and the capacity when energy is spent in processes other than transmission, i.e., sensing, processing etc. Further, achievable rates for models with storage inefficiencies taken into account are also obtained. When the knowledge of the energy harvesting process is not known at the receiver, the channel is provided in \cite{Ulukus2}.
 
 A more realistic model is to consider a transmitter with a finite battery capacity. Compared to the infinite buffer and no buffer model, such a channel model has not been much studied in literature. However, various schemes providing achievable rates are known for the finite buffer model. \cite{Rajesh_journey} obtains achievable rates via Markov policies for an AWGN channel using stochastic approximation. This has been extended to a Gaussian multiple access channel (GMAC) model in \cite{Rajesh_ita}. \cite{Rajesh_ita} also provides the capacity region of a GMAC with the transmitting nodes having infinite energy storage capability. This result is also provided independently in \cite{drajan}. \cite{Rajesh_ita} also considers the case when the information about the energy is not known at the receiver.  \cite{Rajesh2} finds the Shannon capacity of a fading AWGN channel with transmitters having perfect/no  information of the fading process at the transmitter. \cite{Rajesh_q} combines the information theoretic and queueing theoretic approaches to obtain the Shannon capacity of an AWGN channel with energy harvesting transmitters having a data queue. In \cite{Rajesh2} and \cite{Rajesh_q} the basic model has been extended to non-ideal models, incorporating battery leakage and energy storage inefficiencies as well. \cite{O_Sim} studies the problem of communicating a measurement to a receiver by optimally allocating energy between the source and the channel encoder. An energy management scheme for systems with finite energy and data buffer capacity is provided in \cite{Sri_Koksal}. \cite{Biyikoglu} provides a concise survey on the queueing theoretic and information theoretic results in the literature on communication systems with energy harvesting sources.
   
In this paper, we find the information theoretic capacity of an AWGN channel when the energy harvesting transmitter node has  only a finite energy buffer. We provide a single letter characterization for the system capacity when there is energy buffer state information available at the receiver (BSIR). The capacity expression for the channel with no BSIR is in terms of ``Shannon strategies". Also, we prove that the system capacity is achieved by the class of Markov energy management policies in the system with BSIR. We also prove the convergence of the capacity of the finite buffer case to the capacity of the infinite buffer system as the buffer length tends to infinity. 

The paper is organized as follows. In Section II we present the system model and explain the notations used. In Section III we obtain the expression for the capacity of a finite buffer system with BSIR and prove that the capacity is achieved by the class of Markov policies. In Section IV, we present achievable rate via truncated Gaussian signalling and a greedy policy. Section V provides  the capacity with no BSIR in terms of ``Shannon strategies". Section VI concludes the paper.

\section{Model and Notation}
\begin{figure}[h]
\includegraphics[scale=.45]{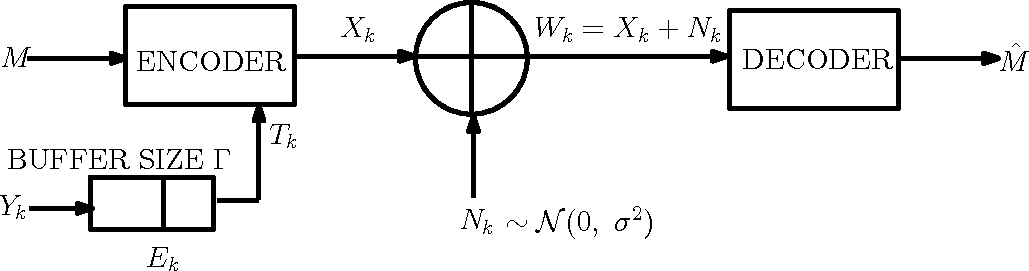}
\caption{The Model}
\label{cha_model}
\end{figure}
We consider an AWGN channel with the transmitter powered by an energy harvesting source (Fig \ref{cha_model}). The transmitting node is equipped with an energy storage device of finite storage capacity $\Gamma.$ The system is slotted. Let $Y_k$ be the energy harvested in slot $k-1$ (which is available for use in slot $k$). We assume $\{Y_k\}$ to be independent, identically distributed (i.i.d) with $\mathbb{E}[Y_k^2] < \infty$. Let $E_k$ denote the energy available in the buffer at the beginning of time slot $k$ (not including $Y_k$; hence energy available for transmission at time $k$ is $\hat{E}_k = Y_k+E_k$). $\{N_k\}$ denotes the additive i.i.d Gaussian noise with mean 0 and variance $\sigma^2$ $\left(\mathcal{N}(0,\sigma^2)\right)$. If $X_k$ is the channel input at time $k$, then the channel output $W_k=X_k+N_k$. We denote the energy used by the transmitting node at time $k$ as $T_k =X_k^2 \leq E_k+Y_k$. We assume that the energy is being used only for transmission.   This is a commonly used assumption in this literature (\cite{Rajesh1},~\cite{Ulukus1}).

$\hat{E}_k$, the energy available at time $k$ denotes the state of the system at time $k$. We consider Markovian energy management policies, i.e., the energy $T_k $ used by the transmitting node  at time $k$ depends only on $\hat{E}_k$. We will show that such policies are sufficient to achieve capacity. Since $\mathbb{E}[Y_k^2]< \infty$,  $\mathbb{E}[T_k]=\mathbb{E}[X_k^2]<\infty$ for each $k$.  

The $\{E_k\}$ process evolves as
\begin{align}
\label{buf_update}
\hspace{40pt}E_{k+1}=\min\{\Gamma,~\hat{E}_{k+1}-T_{k+1}\}.
\end{align}
For a Markovian policy, $\{E_k\}$, $\{\hat{E}_k\}$ and $\{(E_k,X_k)\}$ are Markov chains. 

If the energy is measured and used in quanta, then the energy buffer $\Gamma$ can store a finite number of quanta. Also, then we take $\{T_k\}$, $\{Y_k\}$ and $\{\hat{E}_k\}$ as discrete in number of energy quanta. We allow $T_k$, $Y_k$ and $\hat{E}_k$ to take countably infinite number of values, although $\{E_k\}$ has a finite state space.

When $\{E_k\}$ is a finite state Markov chain then if it is irreducible, it becomes a positive recurrent (ergodic) Markov chain with a unique stationary distribution. If $\{E_k\}$ is not irreducible (it depends on the energy management policy chosen), then, its state space can be decomposed into a finite number of ergodic closed classes, and there will not be a unique stationary distribution. However, 
\begin{align}
\label{pro_lim}
\hspace{60pt}\lim_{n \rightarrow \infty} \frac{1}{n} \sum_{k=1}^{n} P(E_k \in A)
\end{align}
always exists for each measurable set $A$ although in the non irreducible case, the limiting distribution will depend on the initial distribution. This implies that for any policy and any initial distribution, $\{E_k\}$ is an asymptotically mean stationary (AMS) sequence. Then $\{\hat{E}_k\}$ and $\{(\hat{E}_k,E_k,Y_k,X_k)\}$ is also an AMS sequence. It will be ergodic when $\{E_k\}$ is irreducible.

Another case is when $E_k \in [0,\Gamma]$ has an uncountable state space. Now the state space can be taken compact (e.g., $[0,\Gamma]$ itself). If the energy management policy is such that $s \mapsto P_{X(s)}$ (where $P_{X(s)}$ denotes the distribution of $X_k$ when $\hat{E}_k = s$) is continuous in the weak topology of $P_X$, then $\{E_k\}$ is a Feller continuous Markov chain (\cite{Athreya}). Also, because the state space is compact, (\ref{pro_lim}) has a limit point. Then from \cite{Athreya}, we get that any such limit point is a stationary distribution. Thus if (\ref{pro_lim}) converges, $\{E_k\}$ becomes an AMS sequence. If $\{E_k\}$ is Harris recurrent (\cite{Athreya}), then if the stationary distribution exists, it is unique. Thus, Harris recurrence and Feller continuity imply that $\{E_k\}$ is AMS  and (\ref{pro_lim})  converges to its unique stationary distribution.

If distribution of $Y_k$ or $T_k$ has a component which is absolutely continuous on any interval of $\mathbb{R}^+$, then $\{E_k\}$ is Harris recurrent with respect to the Lebesgue measure on $[0,\Gamma]$. More generally, if $\{E_k\}$ is not Harris irreducible with respect to a probability measure, since $\{E_k\}$ is a T-chain (\cite{Mey_Tweed}), its state space $[0,\Gamma]$ can be partitioned into atmost a countable number of sets $\{H_i,~i\in \mathcal{I}\}$ and $A$ where $H_i$ are Harris sets (i.e., $H_i$ are absorbing sets such that $\{E_k\}$ restricted to $H_i$ is Harris recurrent). Since the state space is compact, eventually, $\{E_k\}$ will get absorbed into one of the Harris sets and hence $\{E_k\}$ is AMS for any initial distribution.

When $\{E_k\}$ has a single Harris set then there is a unique stationary distribution and $\{E_k\}$ is AMS, ergodic. Otherwise, $\{E_k\}$ is AMS but has multiple ergodic components and depending on the initial conditions $\{E_k\}$ may or may not be AMS ergodic. We will assume that the system can be started  with any initial battery charge $E_0 \in [0,~\Gamma]$. This can be easily ensured in practice, by charging the battery initially to the level desired and then starting the system. This can ensure that if $\{E_k\}$ has more than one ergodic classes, then we can start $E_0$ in a state such that the chain will operate in only one specific ergodic class (e.g., class which will provide maximum mutual information) and then we will obtain $\{E_k\}$ which is AMS ergodic.

Thus, we will from now on assume that no matter which Markov energy management policy we use, we will obtain AMS ergodic sequence $\{(E_k,\hat{E}_k,X_k,Y_k,W_k)\}$. 
\section{Capacity of Finite Buffer System with BSIR}
In this section, we provide the capacity $C$ of the system. We assume that $\hat{E}_k$ is known to the transmitter and the receiver at time $k$. We now give the main result of the section.

 We will say that a rate $R$ is achievable if for each $n$ there exists an ($2^{nR},~n$) encoder at the transmitting user and a decoder such that the average probability of error $P^{(n)}_e \rightarrow 0$ as $n \rightarrow \infty$. $\mathcal{M} \triangleq [1:2^{nR}]$ denotes the message set to be transmitted. Capacity $C$ is the limit of achievable rates. As in \cite{Robert_Gray}, we will limit ourselves AMS, ergodic capacities.
\begin{theorem}
\label{T1} 
The capacity of a finite buffer energy harvesting system with buffer size $\Gamma$ is given by
\begin{flalign}
\label{exp_pi}
\hspace{40pt}C= \sup_{\pi,~P_{X(s)}}\sum_s \pi_s I(X(s);W).
\end{flalign}
where the stationary distributions $\pi $ are obtained via the Markov policies and $P_{X(s)}$ is any distribution on $X$ with $X^2 \leq s$.
\end{theorem}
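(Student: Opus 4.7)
The plan is to split the argument into three pieces: (i) achievability via random coding under an arbitrary Markov policy, (ii) a converse that single-letterizes the rate of any code as a time average of conditional mutual informations, and (iii) a reduction showing that this time average can be matched by some Markov policy. Parts (i) and (iii) together give the $\ge$ direction of (\ref{exp_pi}); parts (ii) and (iii) together give $\le$.

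For achievability, I would fix a Markov policy that makes $\{(\hat E_k,E_k,X_k,Y_k,W_k)\}$ AMS ergodic (ensured by the assumption at the end of Section~II) with state distribution $\pi$ and per-state input $P_{X(s)}$ satisfying $X(s)^2\le s$. The code is a state-dependent random codebook: for each message $m\in\mathcal M$ and each state symbol $s$, independently draw a sub-codeword of i.i.d.\ $P_{X(s)}$ symbols; at time $k$ the encoder reads $\hat E_k$ and transmits the symbol from the appropriate sub-codeword. Because the decoder also observes $\hat E^n$, joint typicality decoding conditional on the state sequence succeeds for any $R<\frac1n\sum_{k=1}^n I(X_k;W_k\mid\hat E_k)$, and Gray's ergodic theorem for AMS sources (the framework invoked just before Theorem~\ref{T1}) drives this Cesaro sum to $\sum_s\pi_s I(X(s);W)$ almost surely as $n\to\infty$, delivering the lower bound.

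For the converse, Fano gives $nR\le I(M;W^n\mid\hat E^n)+n\epsilon_n$ with $\epsilon_n\to 0$. Expanding by the chain rule yields $I(M;W^n\mid\hat E^n)=\sum_k I(M;W_k\mid W^{k-1},\hat E^n)$, and using memorylessness of the AWGN channel together with the Markov chain $(M,W^{k-1},\hat E^n)\to(X_k,\hat E_k)\to W_k$, each summand is bounded by $I(X_k;W_k\mid\hat E_k)$. Hence $R\le\frac1n\sum_k I(X_k;W_k\mid\hat E_k)+\epsilon_n$ for every feasible code.

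The crucial and most delicate step is (iii): showing that for any feasible code the right-hand side is itself no larger than $\sum_s\pi'_s I(X(s);W)$ for some stationary $\pi'$ induced by a Markov policy. The key observation is that $I(X_k;W_k\mid\hat E_k)$ depends on the joint $(X_k,\hat E_k)$ only through the conditional $P_{X_k\mid\hat E_k}$ and the marginal $P_{\hat E_k}$. From the original code I would read off the Cesaro-averaged conditional $\bar P_{X\mid s}$, set $P_{X(s)}=\bar P_{X\mid s}$, and let $\pi'$ be the stationary distribution of the resulting Markov chain on $\hat E$. Showing that $\sum_s\pi'_s I(X(s);W)$ matches $\limsup_n\frac1n\sum_k I(X_k;W_k\mid\hat E_k)$ combines concavity of mutual information in the input distribution with the AMS-ergodic identification of Cesaro averages with stationary expectations established in Section~II; measurability of $s\mapsto P_{X(s)}$ and, in the uncountable state-space case, Feller continuity and the Harris-set decomposition cited there are precisely what guarantee the induced chain has the desired stationary distribution. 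The peak-energy constraint $X(s)^2\le s$ survives averaging automatically because the original code was already energy-feasible. This reduction, rather than the standard Fano or random-coding arguments, is where the real work lies, because $\hat E_k$ is driven by past inputs rather than being an exogenous state and the original encoder may exploit the full history $(M,\hat E^k)$.
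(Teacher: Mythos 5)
Your architecture is the paper's: state-dependent random codebooks with joint-typicality decoding for achievability, and a Fano/chain-rule converse landing on $\frac1n\sum_{k=1}^n I(X_k;W_k\mid\hat E_k)$. The interesting difference is where the two arguments place the burden of proof, and it runs in opposite directions. Your step (iii) --- reading off the Cesaro-averaged conditional $\bar P_{X\mid s}$, checking that the averaged marginal on $\hat E$ is stationary for the kernel this induces, and invoking concavity of $I(X;W)$ in the input law --- is precisely the reduction the paper's converse needs but never writes down: Appendix A stops at the sum of conditional mutual informations and simply asserts the passage to $\sup_{\pi,P_{X(s)}}\sum_s\pi_s I(X(s);W)$. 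So on the converse side your proposal is more complete than the paper's own proof, and your observation that $\hat E_k$ is an input-driven state (so the reduction to Markov policies is not the textbook Fano step) is exactly the right thing to worry about; for the uncountable state space you would still need to nail down existence of the limiting $\pi'$ along a subsequence, which is what the Feller/Harris machinery of Section~II is for. On the achievability side the situation is reversed: you assert that joint typicality decoding works for any $R$ below the Cesaro-averaged mutual information and that Gray's ergodic theorem finishes the job, but for an AMS, non-stationary input process the AEP (convergence of the information density, not just of the mutual-information averages) is not automatic. This is where the paper does its real work: it verifies the hypotheses of Theorem~6.4.3 of Gray --- namely $I(\bar X_1;\bar X_{-1},\dots\mid\bar X_0)<\infty$ and $I(\bar X^K;\bar W)<\infty$, shown via $\mathbb{E}[\bar X_1^2]<\infty\Rightarrow H(\bar X_1)<\infty$ --- so that the Pinsker rate $I^*(\bar X;\bar W)$ equals $\bar I(\bar X;\bar W)$ and the coding theorem applies at that rate. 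Without some such information-stability check your achievability claim has a genuine (though repairable) gap.
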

\begin{proof}
See Appendix A.
\end{proof}
In the above theorem, if a Markov policy provides $\{E_k\}$ with more than one stationary distribution (i.e., $\{E_k\}$ has more than one ergodic sets) then, each of them is separately used to compute the RHS of (\ref{exp_pi}), i.e., in that case we will start with an initial state in the ergodic set which provides the largest RHS in (\ref{exp_pi}).\\
\textbf{Remarks:} The optimum policy does not permit any closed form expression and has to be numerically computed. The numerical computation is easy when the state space is finite. When the state space is uncountable, it may be difficult to compute the capacity. In the next section we provide a tight achievable lower bound for this case which is easier to compute. 
\section{Achievable Rate via Truncated Gaussian Signalling}
As stated in the previous section, the computation of the channel capacity especially when the state space is uncountable is a difficult task. Hence we look for easily computable achievable rates that  well approximate the finite buffer channel capacity under certain conditions. Truncated Gaussian signalling, which is the capacity achieving signalling scheme (\cite{Rajesh1}) for infinite buffer model, is one such scheme for the uncountable case. We use it in the following and compare with the capacity. Let $\{X_k',~k\geq 0\}$ be an i.i.d sequence with distribution  $\mathcal{N}(0,\mathbb{E}[Y]-\epsilon)$ where $\epsilon > 0$ is a small constant.

We consider the truncated Gaussian policy,
\begin{align}
\label{trunc_gauss_eqn}
\hspace{30pt} X_{k}=\text{sgn}(X_{k}')\min\{\sqrt{E_k+Y_{k}},|X_{k}'|\}.
\end{align}
where sgn($x$)$ = x/|x|$ for $x\neq 0$ and sgn$(x)=0,$ for $x=0$. Thus the energy buffer evolves as
\begin{align}
\hspace{30pt} E_{k+1}=\min\{\Gamma, \max\{0, E_{k}+\eta_{k+1}\}\}.
\end{align}
where $\eta_{k} \triangleq Y_k-X_k'^2$ for $k\geq 1$. We have $\mathbb{E}[{\eta_k}]=\epsilon >0.$ 

Let $R(\Gamma)$ denote the rate achieved via truncated Gaussian signalling for buffer length $\Gamma$. Now, we will denote the corresponding capacity by $C(\Gamma)$ and $E_k$ and $X_k$ by $E_k(\Gamma)$ and $X_k(\Gamma)$. We have that $R(\Gamma)$ $\leq$ $C(\Gamma)$ $\leq$ $C(\infty)$ $\triangleq$ $\frac{1}{2}\log(1+\frac{\mathbb{E}[Y]}{\sigma^2})$ where $C(\infty)$ was obtained in (\cite{Rajesh1}, \cite{Ulukus1}). We will show that $R(\Gamma){\rightarrow} C(\infty)$ as $\Gamma \rightarrow \infty$. With the energy management policy (\ref{trunc_gauss_eqn}), $\{E_k(\Gamma),~k\geq 0\}$ is a Markov chain. 
\begin{theorem}
\label{T2} 
The following holds:
\begin{flalign}
\label{lim_C}
\hspace{30pt} \lim_{\Gamma \rightarrow \infty}R(\Gamma)=\frac{1}{2}\log\left(1+\frac{\mathbb{E}[Y]}{\sigma^2}\right) .
\end{flalign}
\end{theorem}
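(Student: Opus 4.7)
\emph{Plan.} I exploit the positive drift $\epsilon=\mathbb{E}[\eta_k]>0$: as $\Gamma$ grows, the stationary buffer level $E_k^{(\Gamma)}$ piles up near the cap $\Gamma$, so the truncation in (\ref{trunc_gauss_eqn}) is almost never active, $X_k$ becomes indistinguishable from the pure Gaussian $X_k'$, and the mutual information tends to the Gaussian capacity $\tfrac12\log(1+(\mathbb{E}[Y]-\epsilon)/\sigma^2)$. The upper bound $R(\Gamma)\leq C(\infty)$ is already noted in the text, so it suffices to prove the matching lower bound for each fixed $\epsilon$ and then to send $\epsilon\downarrow 0$ by a diagonal argument over the free design parameter.

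\emph{Stationary analysis.} Setting $V_k=\Gamma-E_k$, the recursion becomes $V_{k+1}=\max\{0,\min\{\Gamma,V_k-\eta_{k+1}\}\}$, a random walk with i.i.d.\ increments $-\eta_{k+1}$ of negative mean $-\epsilon$ and finite variance (from $\mathbb{E}[Y^2]<\infty$ together with the Gaussian fourth moment of $X_k'$), reflected at $0$ and $\Gamma$. A monotone coupling dominates $V_k$ pathwise by the singly reflected walk $\tilde V_{k+1}=\max\{0,\tilde V_k-\eta_{k+1}\}$, whose stationary law $\tilde V_\infty$ is proper by Loynes's construction; hence $\pi^{(\Gamma)}(\{E\leq M\})\leq P(\tilde V_\infty\geq\Gamma-M)\to 0$ for any fixed $M$, so $\hat E_\infty^{(\Gamma)}\to\infty$ in probability. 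Because $X\neq X'$ can happen only on $\{X'^2>\hat E\}$ and $(X-X')^2\leq X'^2\in L^1$ pointwise, dominated convergence yields $\mathbb{E}[(X-X')^2]\to 0$ at stationarity.

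\emph{Mutual information.} By Theorem~\ref{T1} applied to the Markov truncated-Gaussian policy, $R(\Gamma)=I(X;W\mid\hat E)$ at the stationary law. Because $X$ is a deterministic function of $(X',\hat E)$ and $N\perp(X',\hat E)$, the two chain-rule decompositions of $I(X,X';W\mid\hat E)$ give $I(X;W\mid\hat E)=I(X';W\mid\hat E)$. Since $X'\perp\hat E$ (the Gaussian sequence is drawn independently of the buffer history), the chain-rule decomposition of $I(X';W,\hat E)$ further yields $I(X';W\mid\hat E)\geq I(X';W)$. Thus $R(\Gamma)\geq I(X';X+N)$, with $X+N\to X'+N$ in $L^2$ by the previous step. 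Lower semicontinuity of mutual information under weak convergence (applicable here because the limiting joint law is absolutely continuous thanks to the Gaussian noise $N$) then gives
\begin{align*}
\liminf_{\Gamma\to\infty}R(\Gamma)\ \geq\ I(X';X'+N)=\tfrac12\log\Bigl(1+\tfrac{\mathbb{E}[Y]-\epsilon}{\sigma^2}\Bigr),
\end{align*}
and letting $\epsilon\downarrow 0$ finishes the proof.

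\emph{Main obstacle.} The delicate point is the lower-semicontinuity step, since mutual information is only lower-semicontinuous under weak convergence in general. A cleaner self-contained alternative is the entropy-power inequality $2^{2h(W)}\geq 2^{2h(X)}+2\pi e\sigma^2$ combined with a direct continuity argument for $h(X)\to h(X')$, exploiting that the truncation-free event $\{|X'|\leq\sqrt{\hat E}\}$ has probability tending to $1$ (by the stationary analysis), on which $X$ has exactly the Gaussian density of $X'$; this route also makes the $\Gamma\to\infty$ and $\epsilon\downarrow 0$ limits easy to interchange.
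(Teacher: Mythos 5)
Your proof is correct and reaches the same two milestones as the paper's --- (i) the stationary buffer level escapes to infinity so the truncation in (\ref{trunc_gauss_eqn}) is asymptotically inactive, and (ii) the information rate therefore converges to the Gaussian capacity with power $\mathbb{E}[Y]-\epsilon$, after which $\epsilon\downarrow 0$ --- but both milestones are established by genuinely different means. For (i), the paper couples $E_k(\Gamma)$ monotonically in $\Gamma$ with the infinite-buffer process $\bar E_k$ and uses stochastic monotonicity in both $k$ and $\Gamma$ to interchange limits; you instead pass to $V_k=\Gamma-E_k$ and dominate the doubly reflected walk by the singly reflected (Loynes) walk, getting $\pi^{(\Gamma)}(E\leq M)\leq P(\tilde V_\infty\geq\Gamma-M)\to 0$ directly from the negative drift. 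Your route is more self-contained (it reduces to the standard GI/G/1 waiting-time bound) and avoids the two-parameter monotonicity bookkeeping. For (ii), the paper invokes a continuity theorem for mutual information over Gaussian channels (\cite{Ver_fun_mmse}, Theorem 9); you instead prove the clean sandwich $R(\Gamma)=I(X;W\mid\hat E)=I(X';W\mid\hat E)\geq I(X';X+N)$ via two chain-rule decompositions (using $X=f(X',\hat E)$, $N\perp(X',\hat E)$, and $X'\perp\hat E$) and then need only \emph{lower} semicontinuity of mutual information, since the upper bound $R(\Gamma)\leq C(\infty)$ is free. This is a real simplification: lower semicontinuity of relative entropy under weak convergence of the joint law is classical and needs no absolute-continuity caveat, so the ``delicate point'' you flag is actually harmless, and your EPI fallback is unnecessary. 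Two small tightenings you should make: the step $\mathbb{E}[(X-X')^2]\to 0$ is not literally dominated convergence because the law of $\hat E$ changes with $\Gamma$; write it as $\mathbb{E}[X'^2\mathbf{1}\{X'^2>\hat E\}]\leq\mathbb{E}[X'^2\mathbf{1}\{X'^2>M\}]+M\,P(\hat E<M)$ and let $\Gamma\to\infty$ then $M\to\infty$. Also, since the policy itself depends on $\epsilon$, the final statement requires the diagonal argument you allude to (or defining $R(\Gamma)$ as a supremum over $\epsilon$); the paper is equally informal on this point.
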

\begin{proof}
See Appendix B.
\end{proof}
\subsection*{Example : Uncountable State Space}
Now $Y_k$ and $E_k$ can take any non-negative value with $E_k \leq \Gamma$. We take $\Gamma = 4$. $Y_k$ is taken to be uniformly distributed over $[0,Y_{\max}]$ where $Y_{\max}$ is varied to obtain various $\mathbb{E}[Y]$ values and $\sigma^2$ is chosen to be 1. The capacity of such a system is compared against the rate achieved via truncated Gaussian signalling in Fig \ref{cont_comp}. The corresponding $C(\infty)$ is also plotted. We see that the truncated Gaussian signalling provides a very good approximation to the capacity even though the buffer size is quite small and $C(\Gamma)$ is not close to $C(\infty)$. 
 
 In Fig \ref{plot_gamma}, we also plot the rate achieved for truncated Gaussian case, as a function of $\Gamma$. The rate is computed via algorithm provided in \cite{loeliger2}. 
\subsection{Finite State Space}
Now we consider the where the energy is quantized and $X_k^2$, $E_k$ and $Y_k$ take non negative integer values in terms of the energy quantizer. Now $E_k$ is a finite state Markov chain. In this case computing capacity (\ref{exp_pi}) is not that difficult. Also truncated Gaussian signalling (\ref{trunc_gauss_eqn}) cannot be used. However, we can still define a greedy policy. In greedy policy we  generate the channel input symbol $X_k$, with  the distribution optimizing $I(X(s);W)$ corresponding to the peak power $\hat{E}_k=s$, optimized over the finite state space. The optimization is done via the steepest descent algorithm. 
\subsubsection*{Example}
 We fix buffer size $\Gamma = 4$. $Y_k$ is taken to be uniformly distributed over $\{0,1\hdots Y_{\max}\}$ and $Y_{\max}$ is varied to obtain different $\mathbb{E}[Y]$. $\sigma^2 $ is taken to be 1. We plot the capacity $C(\Gamma)$ as a function of average harvested energy $\mathbb{E}[Y]$. The capacity is compared with the rate achieved via the greedy policy. In Fig \ref{disc_comp}, we plot the capacity via (\ref{exp_pi}), the achievable rate via the greedy policy and the capacity of the infinite buffer case. One sees that the greedy policy provides a rate close to the capacity.
 \begin{figure}[h]
 \hspace{40pt}
\includegraphics[scale=.45]{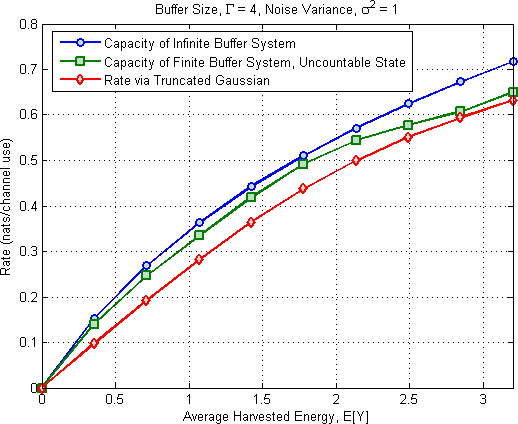}
\hspace{40pt}\caption{Finite buffer system capacity for uncountable state space.}
\label{cont_comp}
\end{figure}
 \begin{figure}[h]
 \hspace{40pt}
\includegraphics[scale=.45]{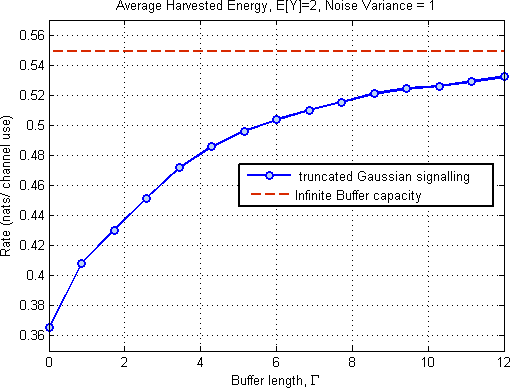}
\hspace{40pt}\caption{Convergence of rate obtained via truncated Gaussian signalling to the infinite buffer capacity.}
\label{plot_gamma}
\end{figure}
\begin{figure}[h]
 \hspace{40pt}
\includegraphics[scale=.45]{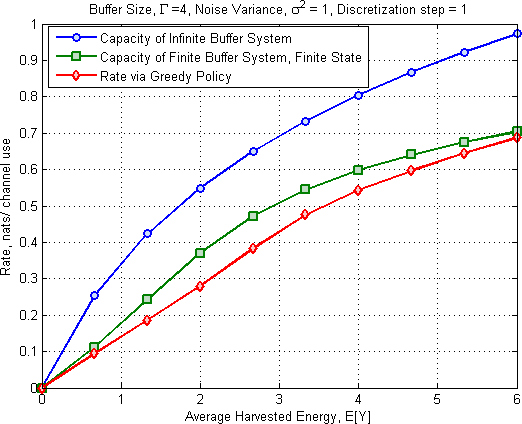}
\hspace{40pt}\caption{Finite buffer system capacity for finite state space.}
\label{disc_comp}
\end{figure}
\section{Capacity of Finite Buffer System with No BSIR}
In this section, we provide the capacity of a finite buffer system when $\hat{E}_k$ is not known at the receiver. This corresponds to an AWGN channel with  no BSIR. For the particular case of i.i.d states and finite input, output and state alphabet, \cite{Shannon58} provides the capacity of a point-to-point channel. The capacity is expressed in terms of functions defined from the state space to the input alphabet (``Shannon strategies"). An extension of the result in \cite{Shannon58} to point-to-point channels with the input, the output and the state alphabet being the real line is provided in \cite{Bar_Wor}. A generalization of the channel model in \cite{Shannon58} to finite alphabet channels with the state selected by an indecomposable, aperiodic Markov chain is provided in \cite{Jelinek}.  

We construct a sequence of order $m$ channels without side information associated with the original channel model as in \cite{Jelinek}. Capacity of each of these channels is known. Capacity of the system under consideration is then expressed as the limit of the capacity of the order $m$ channels as $m \rightarrow \infty$. As in previous sections, for the channel model under consideration, the side information $\hat{E}_k$ is a Markov chain. Let $\{U_k\}$ denote i.i.d auxiliary random variables. Let $f^{(m)}: \hat{\mathcal{E}}^{(m)} \rightarrow \mathcal{X}$ be functions defined on the $m-$tuple state space of $\{\hat{E}\}$ to the input alphabet (either finite or uncountable) called as ``strategy letters" in \cite{Jelinek}. 

 The order $m$ channel without side information is shown in Fig \ref{No_CSIR}. First we observe that the capacity result in \cite{Jelinek} holds for the case of uncountable alphabet as well. The capacity $C(m)$ of the channel model depicted in Fig \ref{No_CSIR},  is obtained as a direct extension of Shannon strategies to the uncountable alphabet case as given in \cite{Bar_Wor}. Then, the AWGN channel capacity of energy harvesting transmitters with a finite energy buffer and the energy available for transmission not known at the receiver is expressed as the limiting value of $C(m)$ as $m \rightarrow \infty$ (\cite{Jelinek}). The Markov policies which do not provide Harris recurrent Markov chains can be handled as in Section II. 
\begin{figure}[h]
 \hspace{-2pt}
\includegraphics[scale=.5]{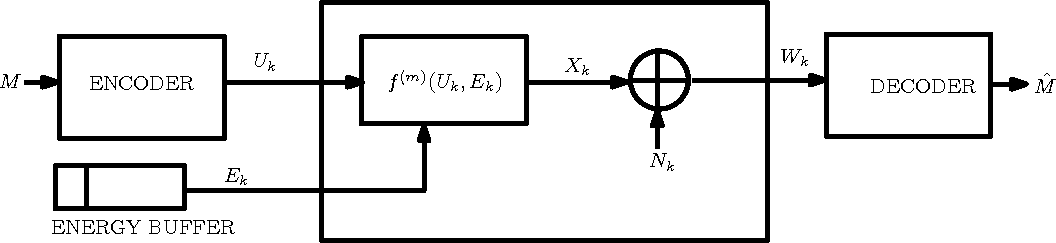}
\hspace{-100pt}
\hspace{40pt}\caption{Order $m$ channel without side information.}
\label{No_CSIR}
\end{figure}
 \begin{align*}
 \end{align*}
\section{Conclusions}
In this paper, we have considered an AWGN channel with a transmitter powered by an energy harvesting source and is equipped with a finite energy buffer. We model it as a channel with side information. We find the capacity when the state information is causally available at the transmitter but may or may not be available at the receiver. We also find easily computable achievable rates by using truncated Gaussian signalling and prove the convergence of the finite buffer capacity to the infinite buffer capacity as buffer length tends to infinity. Numerical examples are provided to substantiate the theoretical results.
\section*{Appendix A}
\section*{Proof of Theorem 1}
\textbf{Achievability}:\\
Coding and Signalling Scheme:\\
We start with a state $E_0$ and  generate i.i.d sequence $\{X_k(s),~k\geq 0\}$ for each state $\hat{E}=s$ with a distribution such that $X_k^2(s) \leq s $ and the RHS of (\ref{exp_pi}) is maximized. This way, we form $2^{nR}$ codewords independent of each other for each $M$. Whenever $\hat{E}_k = s$ we will pick the next untransmitted $\{X_k(s),~k\geq 0\}$ from the corresponding codeword. \\
\hspace{1pt}As explained in Section II, $\{(\hat{E}_k,X_k,W_k),~k=1,\hdots n\}$ is an AMS, ergodic sequence at transmission of each message. Also, since the channel is memoryless, if $I(X_0;W_0) < \infty$, AEP (Asymptotic Equipartition Property) holds (\cite{Robert_Gray}). But $I(X_0;W_0) \leq H(X_0)$ where $H(X_0)$ is defined as in \cite{Robert_Gray}. Since $\mathbb{E}[X_0^2] < \infty$, we can show $H(X_0)< \infty$ and hence $I(X_0;W_0)<\infty$. \\    
Decoding:\\
The decoder adopts jointly typical decoding. \\
For this coding, decoding scheme,  $P_e^{(n)} \rightarrow 0$ if  $ R < \bar{I}(X;W)=\overline{\lim}_{n \rightarrow \infty} \frac{1}{n}I(X^n;W^n)$. Let the limiting stationary distribution for this sequence be denoted by $\bar{P}$ and its limiting rate be denoted by $\bar{I}(\bar{X};\bar{W})$. Let the corresponding Pinsker rates (\cite{Robert_Gray}) for the two sequences be $I^*(X;W)$ and $I^*(\bar{X};\bar{W})$. Then $\bar{I}(X;W) \geq I^*(X;W)=I^*(\bar{X};\bar{W})\leq \bar{I}(\bar{X};\bar{W}) $. We would like to show that the rate $\bar{I}(\bar{X};\bar{W})$ is achievable. For this it is sufficient to show that $I^*(\bar{X};\bar{W})= \bar{I}(\bar{X};\bar{W}) $. From Theorem 6.4.3, \cite{Robert_Gray}, if we show that $I(\bar{X}_1;\bar{X}_{-1},\hdots|\bar{X}_0)<\infty$ and $I(\bar{X}^K;\bar{W})<\infty$ where $\bar{X}^K=(\bar{X}_1,\hdots \bar{X}_{K})$ and $\bar{W}$ is the two sided stationary version of $\{\bar{W}_k,~k\geq 0\}$, then $I^*(\bar{X};\bar{W})=\bar{I}(\bar{X};\bar{W})$. However, 
\begin{flalign*}
&I(\bar{X}_1;\bar{X}_{-1},\hdots|\bar{X}_0)\stackrel{(a)}{\leq} H(\bar{X}_1|\bar{X}_0)\stackrel{(b)}{\leq} H(\bar{X}_1) \stackrel{(c)}{<} \infty,\\
&I(\bar{X}^K;\bar{W})\stackrel{(d)}{\leq} H(\bar{X}^K)\stackrel{(e)}{\leq} \sum_{k=1}^{K}H(\bar{X}_k) \stackrel{(f)}{<} \infty ,
\end{flalign*}
where the entropy $H(.)$ is as defined in (Chapter 7, \cite{Robert_Gray}).  $(a)$ follows from (Corollary 7.11, \cite{Robert_Gray}) and $(b)$ follows from the non-negativity of mutual information. $(c)$ follows from the facts that $\mathbb{E}[\bar{X}_1^2]<\infty$ and that for a discrete random variable $X$ (obtained by quantizing $X$) with $\mathbb{E}[X^2]<\infty$, the entropy $H(X)<\infty$. Along similar lines, $(d)$, $(e)$ and $(f)$ follow.  \\
\textbf{Converse}: Let there be AMS, ergodic $(2^{nR},~n)$ encoding, decoding scheme with average probability of error $P_e^{(n)} \rightarrow 0$ as $n \rightarrow \infty$ that satisfies the energy constraints. Let $M \in \{1,\hdots,~2^{nR}\}$ uniformly distributed messages need to be transmitted by this code book. Let $X^{n} \triangleq (X_1,\hdots,~X_n)$ be the corresponding codeword that is transmitted. Then,
\begin{align*}
R=\frac{1}{n}H(M)&=\frac{1}{n}H(M|W^n,~\hat{E}^n)+\frac{1}{n}I(M;W^n,~\hat{E}^n)\\
&=\frac{1}{n}H(M|W^n,~\hat{E}^n)+\frac{1}{n}I(M;W^n|\hat{E}^n)\\
&\stackrel{(a)}{\leq} P_e^{(n)}R+\frac{1}{n}I(X^n;W^n|\hat{E}^n)
\end{align*}
\begin{align*}
&=P_e^{(n)}R+\frac{1}{n}H(W^n|\hat{E}^n)\\
&~~~-\frac{1}{n}H(W^n|X^n,~\hat{E}^n)\\
&=P_e^{(n)}R+\frac{1}{n}\sum_{k=1}^{n}[H(W_k|W^{k-1},~\hat{E}^n)\\
&~~~-H(W_k|X_k,~\hat{E}_k)]\\
&\leq P_e^{(n)}R+\frac{1}{n}\sum_{k=1}^{n}H(W_k|\hat{E}_k)\\
&~~~-H(W_k|X_k,~\hat{E}_k)\\
&=P_e^{(n)}R+\frac{1}{n}\sum_{k=1}^{n}I(W_k;X_k|\hat{E}_k)\\
\end{align*}
where (a) follows via Fano's inequality. Thus as $n \rightarrow \infty $, we obtain $R\leq \sup_{\pi,~P_X(s)}\sum_s \pi_s I(X(s);W)$ with $X^2(s) \leq s$.
$~~~~~~~~~~~~~~~~~~~~~~~~~~~~~~~~~~~~~~~~~~~~~~~~~~~~~~~~~~~~~~~~~~~~~~~\square$
\noindent
\section*{Appendix B}
\section*{Proof of Theorem 2}
First we prove the existence of a unique stationary distribution for the Markov chain $\{E_k(\Gamma),~k\geq 0\}$ defined in (\ref{trunc_gauss_eqn}). Observe that $|\eta_k^-| \leq X_k'^2$, where $\eta_k^-=\min\{\eta_k,~0\}$ and $\eta_k$ is defined in Section IV . Since $X_k' \sim $  i.i.d, $\mathcal{N}(0,\mathbb{E}[Y]-\epsilon)$,
 \begin{align}
 \label{finite_moment}
\mathbb{E}[(\eta_k^-)^\alpha] \leq \mathbb{E}[|X_k'|^{2\alpha}] < \infty, ~\forall \alpha > 0.
\end{align}
 Let $ \bar{E}_{k} \triangleq E_{k}(\infty),~\tau'(\Gamma) \triangleq \inf\{k: \bar{E_k} \geq \Gamma\}$ and $\tau({\Gamma}) \triangleq \inf\{k: E_k(\Gamma) = \Gamma\}$. Assume $E_0(\Gamma) = \bar{E}_0= e \leq \Gamma$.  Then $\tau(\Gamma)=\tau'(\Gamma)$ a.s. . Also, from (\ref{finite_moment}), $~\mathbb{E}[({\tau^\alpha(\Gamma)}|E_0=e]\leq \mathbb{E}[(\tau^\alpha(\Gamma))|E_0=0] \leq \mathbb{E}[(\tau'^\alpha(\Gamma)|E_0=0] < \infty$ (\cite{Alan_Gut} , Chapter 3). The epochs when $E_k(\Gamma)=\Gamma$ corresponds to the regeneration epochs of the Markov chain $\{E_k(\Gamma)\}$ and hence there  exists a unique stationary distribution $\pi_E(\Gamma)$ for the Markov chain.
 
Now we show that $\pi_E(\Gamma)$ is stochastically non-decreasing in $\Gamma$. Let  $E_0(\Gamma)=E_0(\Gamma+1)=0$. We have $P(E_1(\Gamma) \geq y|E_0(\Gamma)=e_0) \leq P(E_1(\Gamma+1)\geq y|E_0(\Gamma+1)=e_0),$ for all $y$ and hence in particular for $e_0=0$. Denoting the weak limit of $ E_k(\Gamma)$ by $E(\Gamma)$, it follows that $P(E(\Gamma)\geq y) \leq P(E(\Gamma+1) \geq y)$ (\cite{Stoyan}).  We thus obtain,
\begin{align*}
\hspace{20pt}E_k(\Gamma) \stackrel{st}{\leq} E_k(\Gamma+1)~\text{and}~ {E}(\Gamma) \stackrel{st}{\leq} E(\Gamma+1),
\end{align*}
where $X \stackrel{st}{\leq}Y$ denotes $P(X>x)\leq P(Y>x)$ for all $x$.
Let $E_0(\Gamma)=E_0=0$. Then, for any $k \geq 0,$ 
\begin{align}
\label{st_convrg}
\hspace{30pt}E_k(\Gamma) \stackrel{st}{\nearrow}\bar{ E}_k ~\text{as}~ \Gamma \rightarrow \infty.
\end{align}
We also know that, $0=\bar{E}_0 \stackrel{st}{\leq} \hdots \stackrel{st}{\leq} \bar{E}_k $ converges weakly to $\infty$ as $k \rightarrow \infty$. Thus $\lim_{k \rightarrow \infty}P(E_k \geq x)=1$ for any $x$. 
By stochastic monotonicity in $k$ (when $E_0(\Gamma)=0$) and $\Gamma$,
\begin{align*}
\hspace{20pt} P(E_k(\Gamma) \geq x) \geq 1-\epsilon,  ~ \forall~ \Gamma ~\geq ~\Gamma_1,~k \geq~ N_1,
\end{align*}
for some $\Gamma_1$ and $N_1$. Again by stochastic monotonicity in $k$ for $\Gamma \geq \Gamma_1$,  
\begin{align*}
\hspace{30pt} P(E(\Gamma) \geq x) \geq 1-\epsilon, ~ \forall ~\Gamma \geq \Gamma_1.
\end{align*}
Therefore,
\begin{align*}
&\lim_{\Gamma \rightarrow \infty}P_{\pi_\Gamma}(|X_k(\Gamma)|= |X_k'|)\\
&=\lim_{\Gamma \rightarrow \infty} P_{\pi_\Gamma}(\sqrt{E_{k-1}(\Gamma)+Y_k} \geq |X_k'|)=1.
\end{align*}
Also, $\mathbb{E}[X_k^2(\Gamma)] \leq \mathbb{E}[X_k'^2]= \mathbb{E}[Y]-\epsilon.$ Therefore, we obtain (\cite{Ver_fun_mmse}, Theorem 9), as $\Gamma \rightarrow \infty$,  
\begin{align*}
&\bar{I}(X_k(\Gamma);W_k) \rightarrow \bar{I}(X_k(\infty);W_k(\infty))=\\
&\frac{1}{2}\log\left(1+\frac{\mathbb{E}[Y]-\epsilon}{\sigma^2}\right).
\end{align*}
 Hence,
\begin{align*}
 \frac{1}{2}\log(1+\frac{\mathbb{E}[Y]}{\sigma^2}) &\geq \lim_{\Gamma \rightarrow \infty} \bar{I}(X_k(\Gamma);W_k)\\
 &\geq \frac{1}{2} \log(1+\mathbb{E}[Y]-\epsilon), ~ \forall~\epsilon > 0.
\end{align*}
We take $\epsilon \rightarrow 0$  and get the required result. $~~~~~~~~~~~~~~~~~~~~~~\square$

\bibliographystyle{IEEEtran}
\bibliography{fb_cap_fin_16713}
 \end{document}